\documentclass[conference]{IEEEtran}
\usepackage{cite}
\usepackage{amsmath,amssymb,amsfonts}
\usepackage{graphicx}
\usepackage{textcomp}
\usepackage{xcolor}
\def\BibTeX{{\rm B\kern-.05em{\sc i\kern-.025em b}\kern-.08em
    T\kern-.1667em\lower.7ex\hbox{E}\kern-.125emX}}

\usepackage[appendix=inline]{apxproof}
\newtheoremrep{theorem}{Theorem}
\newtheoremrep{lemma}{Lemma}
\newtheoremrep{proposition}{Proposition}
\newtheoremrep{corollary}{Corollary}

\usepackage{environ,etoolbox}
\newcommand{\MyAppendixBuffer}{}%
\NewEnviron{storetoappendix}{\gappto{\MyAppendixBuffer}{\BODY}}%

\usepackage{mathtools}
\usepackage{braket}
\usepackage{dblfloatfix}
\usepackage{multirow}
\usepackage{makecell}
\usepackage{diagbox}
\usepackage{placeins}
\usepackage{subcaption}
\usepackage[table]{xcolor}
\usepackage{algorithm}
\usepackage{algorithmic}
\usepackage{changepage}
\usepackage{amsthm}
\usepackage{enumitem}
\usepackage{hyperref} 
\usepackage{pifont} 

\def \VersionWithComments {}

\ifdefined \VersionWithComments
\usepackage{marginnote}
\newcommand{\marginX}{\marginnote{\huge{\quad\quad\textbf{!}\quad\quad}}}
\newcommand{\cyr}[1]{\mbox{}{\color{green!50!black}\marginX{}\textbf{[Yean-Ru}: #1]}}
\newcommand{\lsw}[1]{\mbox{}{\color{orange}\marginX{}\textbf{[Shang-Wei}: #1]}}
\newcommand{\ylh}[1]{\mbox{}{\color{purple}\marginX{}\textbf{[Lei-Han}: #1]}}
\newcommand{\cyc}[1]{\mbox{}{\color{cyan}\marginX{}\textbf{[Yu-Chung}: #1]}}
\newcommand{\instructions}[1]{{\color{red}\marginX{}\textbf{[Instructions: ``#1'']}}}
\newcommand{\reviewer}[2]{\mbox{}{\color{red}\marginX{}\textbf{[Reviewer #1}: ``#2'']}}
\newcommand{\todo}[1]{\mbox{}{\color{blue}{\marginX{}\textbf{TODO}\ifx#1\\\else:\ \fi #1}}} 
\else
\newcommand{\instructions}[1]{}
\newcommand{\cyr}[1]{}
\newcommand{\lsw}[1]{}
\newcommand{\ylh}[1]{}
\newcommand{\cyc}[1]{}
\newcommand{\reviewer}[2]{}
\newcommand{\todo}[1]{}
\fi

\newcommand{\len}[1]{|#1|}
\newcommand{\MATHFN}[1]{\mathsf{#1}} 
\newcommand{\CKTFN}[1]{\mathtt{#1}}  
\newcommand{\GATE}[1]{\mathsf{#1}}   
\newcommand{\TGATE}{\mathtt{T}}   

\newcommand{\TCfoVtw}[0]{-40.00\%}
\newcommand{\TCfoVtr}[0]{15.15\%}
\newcommand{\TCTfoVtw}[0]{40.00\%}
\newcommand{\TCTfoVtr}[0]{63.64\%}
\newcommand{\TCTNfoVtw}[0]{40.00\%}
\newcommand{\TCTNfoVtr}[0]{63.64\%}

\newcommand{\TDfoVtw}[0]{-20.00\%}
\newcommand{\TDfoVtr}[0]{20.00\%}
\newcommand{\TDTfoVtw}[0]{60.00\%}
\newcommand{\TDTfoVtr}[0]{73.33\%}
\newcommand{\TDTNfoVtw}[0]{60.00\%}
\newcommand{\TDTNfoVtr}[0]{73.33\%}

\newcommand{\AfoVtw}[0]{50.00\%}
\newcommand{\AfoVtr}[0]{0.00\%}
\newcommand{\ATfoVtw}[0]{50.00\%}
\newcommand{\ATfoVtr}[0]{0.00\%}
\newcommand{\ATNfoVtw}[0]{-100.00\%}
\newcommand{\ATNfoVtr}[0]{-300.00\%}

\newcommand{\DfoVtw}[0]{31.15\%}
\newcommand{\DfoVtr}[0]{37.31\%}
\newcommand{\DTfoVtw}[0]{47.54\%}
\newcommand{\DTfoVtr}[0]{52.24\%}
\newcommand{\DTNfoVtw}[0]{22.95\%}
\newcommand{\DTNfoVtr}[0]{29.85\%}
\begin{document}

\title{A Modular and T-Gate Efficient Architecture for Quantum Leading-Zero/One Counter}

\author{
 
 \IEEEauthorblockN{Lei-Han Yao$^{1,2}$, Shang-Wei Lin$^1$}
 \IEEEauthorblockA{\textit{Singapore Institute of Technology$^1$} 
 \\Singapore\\
 shangwei.lin@singaporetech.edu.sg}
 \and
 \IEEEauthorblockN{Yu-Chung Chen$^{1,2}$, and Yean-Ru Chen$^2$}
 \IEEEauthorblockA{\textit{National Cheng Kung University$^2$} \\Tainan, Taiwan (R.O.C.)\\
 chenyr@mail.ncku.edu.tw}
}

\maketitle

\begin{abstract}
The Quantum Leading-Zero/One Counter (QLZOC) is a fundamental component in quantum arithmetic, playing a critical role in normalization, floating-point units, dynamic range scaling, and logarithmic approximations. 
Conventional designs primarily rely on direct Boolean-to-quantum mapping, which results in inefficient resource utilization such as irregular gate growth and width-dependent resource overhead.
In this work, we propose a scalable, modular, and resource efficient architecture for QLZOC by reformulating the counting process into a sequence of systematic conditional bit-flip operations. 
Moreover, our design achieves functional polymorphism so that the same design can be easily toggled between zero and one detection, while ensuring seamless scalability to any bit-width without manual re-tuning. 
We further introduce a Parallel QLZOC (PQLZOC) variant and a Fan-Out optimized (FO-PQLZOC) design. 
In this work, we evaluate resource efficiency based on the classic criteria about T gates, including the number of total T gates being used (T-count) and the number of sequential T gate layers (T-depth). 
By exploiting the properties of all-zero/one qubit blocks and a hierarchical merge strategy, the proposed FO-PQLZOC reduces the T-depth from $\mathcal{O}(m)$ to $\mathcal{O}(\log_2 m)$, where $m$ is the input size. Comparative analysis demonstrates that our optimized architecture achieves a $40.00\%$ reduction in T-count and a $60.00\%$ reduction in T-depth over state-of-the-art designs, providing a high-performance, T-gate efficient solution for general-purpose quantum arithmetic processors.

\end{abstract}

\vspace{1em}
\begin{IEEEkeywords}
Quantum Leading Zero/One Counter (QLZOC), Parallelization, Modularity, Polymorphism, T-gate efficiency, Quantum Arithmetic, Clifford+T
\end{IEEEkeywords}

\section{Introduction} \label{sec:Introduction}
Efficient quantum arithmetic is fundamental to the execution of complex algorithms in scientific computing, signal processing, high-performance data analysis, and constraint-solving tasks~\cite{APPforAinSC,QSMT,3D_FP}. As quantum processors scale, there is an increasing demand for dynamic range scaling and floating-point operations~\cite{Q_Fadder_LZC,Q_Fadder_LZD,Q_Fdivision_LZD,Q_Fdivision,3D_FP}, where the Quantum Leading-Zero/One Counter (QLZOC) serves as a core infrastructure. The task of QLZOC is to identify the position of the first non-zero (or non-one) bit to determine the number of bit-shifting operations required, which is indispensable for mantissa normalization, arithmetic alignment, and numerical compression.

Existing quantum LZC or LOC implementations typically rely on a direct Boolean-to-quantum mapping flow \cite{Q_LZD, Q_Fdivision_LZD, Q_Fadder_LZC, Q_Fadder_LZD}, which necessitates a complete architectural revamp whenever the input size changes. This is because the underlying Boolean expressions scale irregularly with the input register size $n$. For instance, the output $\gamma$ of a QLZD with $3$-qubit input $X = (X_2X_1X_0)$ are represented by:
\[
    \gamma_1 = \overline{X_2} \land \overline{X_1}, \quad \gamma_0 = \overline{X_2}\land (X_1 \lor  \overline{X_0}), 
\]
where $\overline{X_i}$ is the bitwise complementation. 
However, expanding it to a $4$-qubit input $X = (X_3X_2X_1X_0)$ requires a redefined logic to incorporate the additional most significant bit (MSB):
\[
\begin{gathered}
    \gamma_2 = \overline{X_3} \land \overline{X_2} \land \overline{X_1} \land \overline{X_0} , \quad 
    \gamma_1 = \overline{X_3} \land \overline{X_2} \land (X_1 \lor X_0) , \\
    \gamma_0 = \overline{X_3} \land (X_2 \lor (\overline{X_1} \land X_0))
\end{gathered}
\]
Under this paradigm, each Boolean $\GATE{AND}$/$\GATE{OR}$ operator generally requires at least one $\GATE{CCX}$ gate and an additional ancilla qubit to store intermediate results. If restoring the value of ancilla qubits is required to reclaim quantum resources, the usage of gates could be easily doubled (requiring at least another $\GATE{CCX}$ gate per operation). 
Although leveraging intermediate Boolean values could potentially reduce the number of ancilla qubits and circuit depth, it requires additional complex optimization efforts. 
Consequently, these approaches often struggle with resource efficiency issues. Notice that, in this work, we evaluate resource efficiency based on classical criteria about T gates, including the number of total
T gates being used (T-count) and the number of sequential T gate layers (T-depth). 
This is because T gate, compared to common Clifford gates, is exceptionally costly and difficult to implement in fault-tolerant and error-corrected systems. Thus, minimizing T-count and T-depth is essential in quantum circuit design to improve efficiency.


\begin{figure}[t]
    \centering
    \includegraphics[width=0.9\linewidth]{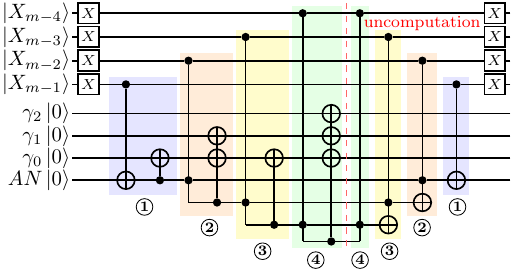}
    \caption{Modular Sequential QLZC for 1 to 4 input size.}
    \label{fig:eg_LZC}
\end{figure}
In this work, we propose a modular and resource-efficient architecture for QLZOC by leveraging the intrinsic conditional bit-flip logic of quantum controlled gates ($\GATE{CX}$, $\GATE{CCX}$, and $\GATE{MCX}$) and the bit-toggle patterns inherent in binary incrementing---the process of incrementing an integer value where multiple lower-order bits are flipped due to carry propagation.

Fig.~\ref{fig:eg_LZC} shows how our approach constructs QLZOCs incrementally for input qubit size from $1$ to $4$.
For a $1$-qubit QLZOC, only the components labeled by \ding{172} are required. To construct a $2$-qubit QLZOC, the components labeled by \ding{173} are added on top of the $1$-qubit QLZOC, i.e., the final quantum circuit of the $2$-qubit QLZOC is \ding{172} + \ding{173}. Similarly, the quantum circuit for a $4$-qubit QLZOC is \ding{172} + \ding{173} + \ding{174} + \ding{175}.
Each additional input qubit requires one $\GATE{CCX}$ and few $\GATE{CX}$ gates for the main computation, with an additional $\GATE{CCX}$ gate if restoring the values of ancilla qubits is necessary.
Our approach ensures that the circuit complexity scales linearly with the input size, enabling straightforward expansion. 
We summarize our technical contributions as follows:
\begin{itemize}
    \item \textbf{Scalable, Garbage-Free Modular Construction}: We propose a sequential architecture (TA-OP QLZOC) integrating \emph{Temporary Logical-AND} gates that scales linearly with bit-width $m$ ($\mathcal{O}(m)$ for $\TGATE$-count, depth, and width).   
    Unlike traditional designs, our approach has systematic structural patterns, which allows seamless scaling to any input size $m$ without altering the underlying design logic and produces zero garbage ancilla, making it highly suitable for qubit-constrained environments.
    
    \item \textbf{Hierarchical Parallelization and Fan-out Control}: We introduce a parallelized variant (PQLZOC) that exploits the property: an all-zero/one $m$-qubit block yields a predictable binary state. By substituting conventional controlled adders with $(\lg(m)\!+\!1)$ $\GATE{CCX}$ $+1$ $\GATE{CX}$ gates, 
    the $\TGATE$-depth and overall depth could be reduced to $\mathcal{O}((\log_2 m)^2)$. Integrating fan-out control (FO-PQLZOC) further reduces $\TGATE$-depth and overall depth to $\mathcal{O}(\log_2 m)$ and $\mathcal{O}(\log_2 m \log_2 \log_2 m)$, respectively.

    \item \textbf{Polymorphic Architectural Design}: By leveraging the inherent logical symmetry between zero and one detection, our design achieves functional polymorphism. This allows for a seamless transition between QLZC and QLOC configurations with minimal Clifford-gate adjustments, maximizing hardware utility.
    
    \item \textbf{Resource Efficiency}: Beyond theoretical complexity, our designs outperform existing state-of-the-art implementations across all physical metrics. For 4-qubit inputs, our optimized design outperforms~\cite{Q_LZD} by 40.00\%, 60.00\%, 47.54\%, and 20.00\% in $\TGATE$-count, $\TGATE$-depth, overall depth, and circuit width, respectively, proving its efficiency in resource for practical hardware realization.
\end{itemize}

The remainder of this paper is structured as follows: 
Sec.~\ref{sec:Preliminaries} reviews the mathematical notations and theoretical foundations. 
Sec.~\ref{sec:Methodology} details the proposed QLZOC methodology, including the quantum-oriented reformulation, hierarchical parallelization, and fan-out optimization strategies. 
Sec.~\ref{sec:RelatedWork} provides a comprehensive comparison between our designs and prior state-of-the-art work, with granular resource derivations deferred to the Supplementary~\cite{Supplementary}. 
Sec.~\ref{sec:ExperimentalResults} presents simulation data and comparative benchmarks validating our theoretical framework across various bit-widths. 
Finally, Sec.~\ref{sec:Conclusions} concludes this paper and discusses future research directions.



\section{Preliminaries} \label{sec:Preliminaries}
\subsection{Mathematical Notation}
\begin{itemize}
\item $\mathbb{N}_0 \!=\! \{0, 1, 2,\dots\}$ and $\mathbb{N}_1 \!=\! \{1,2,\ldots\} \!=\! \mathbb{N}_0\!\setminus\!\{0\}$ denote the sets of nonnegative and positive integers, respectively.

\item The expression $2^p$ denotes a power of two, where $p \in \mathbb{N}_0$.


\item $\len{S}$ denotes the bit/qubit length of a register $S$; e.g., for a 3-qubit register $\ket{X} = \ket{X_2 X_1 X_0}$, $\len{X} = 3$.

\item If--Then--Else function $\MATHFN{ITE}(c, x, y)$ returns $x$ if the Boolean condition $c=1$, and $y$ otherwise. Formally,
\[
\MATHFN{ITE}(c, x, y)
\;\triangleq\;
\begin{cases}
x, & c = 1,\\
y, & c = 0.
\end{cases}
\]
In quantum circuit descriptions, $c$ typically represents a conjunction of control bits, while $x$ and $y$ denote the selected quantum states or operations.
\end{itemize}

\subsection{Cost Metrics of Quantum Circuit}
The circuit's resource overhead is evaluated by its \textbf{Width} (total qubits required), \textbf{Depth} (sequential gate layers), and the number of \textbf{Ancilla} (\#Ancilla) qubits, which includes both \textbf{Garbage Ancilla} (not reset) and \textbf{Reusable Ancilla} (reset to $\ket{0}$). Beyond these fundamental metrics, we focus on the following non-Clifford costs:
\begin{itemize}
    \item \textbf{T-count}: The total number of $\TGATE$ and $\TGATE^\dagger$ gates ($\TGATE^\dagger$ is the inverse of $\TGATE$) appearing in the circuit. This metric quantifies the non-Clifford resource cost, as each $\TGATE$-type gate requires expensive magic-state distillation in a fault-tolerant implementation.
    \item \textbf{T-depth}: The minimal number of sequential layers of $\TGATE$ and $\TGATE^\dagger$ gates that must be executed, considering all possible parallelism across disjoint qubits. This metric reflects the time cost of non-Clifford operations.
\end{itemize}

\subsection{Quantum Variables and Kets}
For conciseness, we adopt the following notations for quantum registers and states throughout this paper:

\begin{itemize}
    \item \label{subsec:multiq-vars}
    An $n$-qubit register $X$ is written as $\ket{X}_n$ or $\ket{X_{n-1:0}} \equiv \ket{X_{n-1}\dots X_0}$, where $X_{n-1}$ and $X_0$ denote the MSB and LSB, respectively. The width subscript $n$ is omitted when it is clear from the context.

    \item 
    A state $\ket{0}$ of width $n$ represents the all-zero state $\ket{0}^{\otimes n}$.

    \item 
    Reusable ancilla qubits are denoted by $AN\ket{0}$, with multiple ancilla qubits indexed as $AN_1\ket{0}, AN_2\ket{0}, \dots$, indicating they are initialized to $\ket{0}$ before use.
\end{itemize}



\begin{figure}[t]
    \centering
    \includegraphics[width=0.85\linewidth]{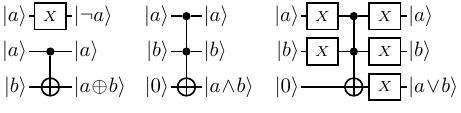}
    \caption{Quantum implementations of classical logic gates ($\GATE{NOT}$, $\GATE{XOR}$, $\GATE{AND}$, and $\GATE{OR}$).}
    \label{fig:Classical_gate_mapping}
\end{figure}

\subsection{Mapping Classical Logic Gates to Quantum Circuits} \label{subsec:Classical_gate_mapping}
Quantum computing has demonstrated the potential to solve problems intractable for classical computers. As a result, there has been growing interest in directly mapping classical Boolean logic onto quantum circuits~\cite{C_EDAQ}. This approach allows existing classical algorithms to be efficiently ported into the quantum domain. Standard Boolean primitives—including $\GATE{NOT}$, $\GATE{XOR}$, $\GATE{AND}$, and $\GATE{OR}$—are implemented via reversible constructions such as $\GATE{X}$, $\GATE{CCX}$ and $\GATE{CNOT}$ gates, typically requiring ancilla qubits to maintain unitarity~\cite{C_MapQ1, C_MapQ2}. 
As illustrated in Fig.~\ref{fig:Classical_gate_mapping}, this strategy facilitates the synthesis of quantum circuits from algebraic Boolean expressions through gate-by-gate emulation. Such reversible mappings underpin complex quantum arithmetic and oracle designs, allowing for the decomposition of high-level functional specifications into hardware-compatible quantum gate sequences.

\subsection{Toffoli ($\GATE{CCX}$) and Temporary Logical-AND ($\GATE{T\text{-}AND}$) Gate} \label{subsec:Temporary-logical-AND-gate}
Aiming for $\TGATE$–efficiency, we employ a Toffoli implementation guided by the meet-in-the-middle synthesis results of Amy et al.~\cite{AmyMaslovMosca2013MiM}, which characterize optimal-depth Clifford+T circuits and, in particular, enable the familiar 7–$\TGATE$, 3–$\TGATE$-depth, total depth~9 Toffoli decomposition (see Fig.~\ref{fig:Amy-TOF}).
\begin{figure}[tb]
    \centering
    \includegraphics[width=0.95\linewidth]{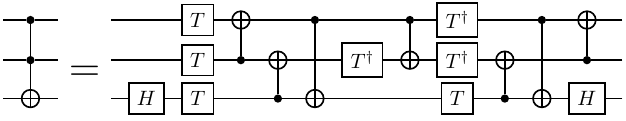}
    \caption{Toffoli gate~\cite{AmyMaslovMosca2013MiM} ($\TGATE^\dagger$ is the inverse of $\TGATE$)}
    \label{fig:Amy-TOF}
\end{figure}
\begin{figure}[t]
    \centering
    \includegraphics[width=0.95\linewidth]{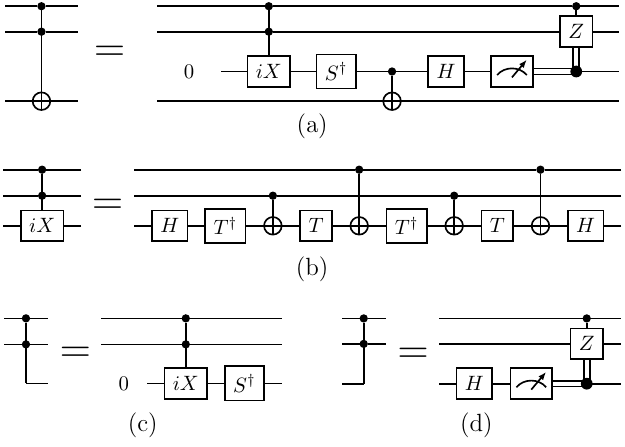}
    \caption{
    (a) Implementation of the Toffoli gate proposed by Jones~\cite{JonesToffoli}. 
    (b) Composition of the $\GATE{iX}$ gate~\cite{iXgate}. 
    (c) AND operation of the Temporary Logical-AND gate construction. 
    (d) Uncomputation using Hadamard and measurement.
    }
    \label{fig:temp_logical_AND}
\end{figure}

In order to reduce the T-count overhead of the Toffoli gate, Jones~\cite{JonesToffoli} proposed a more efficient implementation using only 4~T gates, as shown in Fig.~\ref{fig:temp_logical_AND}(a). This construction is based on an ancilla qubit and the so-called $\GATE{iX}$ gate, illustrated in Fig.~\ref{fig:temp_logical_AND}(b), which performs the operation:
\begin{equation}
\GATE{iX}(\ket{q_0}\ket{q_1}\ket{y}) = i^{q_0 q_1} \ket{q_0}\ket{q_1}\ket{y \oplus (q_0 q_1)}
\label{eq:ix}
\end{equation}
By applying $\GATE{iX}$ to an ancilla initialized to $\ket{0}$:
\begin{equation}
\GATE{iX}(\ket{q_0}\ket{q_1}\ket{0}) = i^{q_0 q_1} \ket{q_0}\ket{q_1}\ket{(q_0 q_1)}
\label{eq:ix_ancilla}
\end{equation}
Jones then corrects the phase by applying the $\GATE{S}^\dagger$ gate to the ancilla qubit. Once this value is copied to the target register, it is necessary to uncompute the $q_0 q_1$ state. However, Jones proposed using a Hadamard gate:
\begin{equation}
\GATE{H}\ket{q_0 q_1} = \frac{1}{\sqrt{2}} \sum_{z \in \{0,1\}} (-1)^{q_0 q_1 z} \ket{z}
\label{eq:hadamard}
\end{equation}
A measurement is then taken and the resulting phase is corrected with a $\GATE{CZ}$ gate.

Building upon this idea, Gidney~\cite{GidneyTempLogicalAND} introduced the Temporary Logical-AND gate ($\GATE{T\text{-}AND}$), which decomposes the Toffoli gate into two modular components: an AND operation and its uncomputation (see Fig.~\ref{fig:temp_logical_AND}(c)--(d)). Specifically, Gidney optimized the AND operation using a more $\TGATE$-efficient architecture (Fig.~\ref{fig:TAND}), which executes on a $\TGATE$-state ancilla $\ket{T}$ with $\TGATE$-count $=4$ and $\TGATE$-depth $=2$. The corresponding uncomputation (Fig.~\ref{fig:temp_logical_AND}(d)) remains unchanged and $\TGATE$-free. Therefore, this architecture minimizes the overall resource overhead.

\begin{figure}[t]
    \centering
    \includegraphics[width=0.95\linewidth]{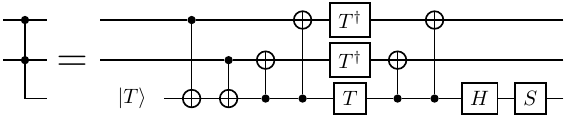}
    \caption{AND operation of the Temporary Logical-AND gate proposed by Gidney~\cite{GidneyTempLogicalAND}.}
    \label{fig:TAND}
\end{figure}

\subsection{Leading-One/Zero Counter (LOC/LZC)}\label{subsec:lzd}
For a binary word, the outputs of the leading-zero counter (LZC) and leading-one counter (LOC) are the numbers of consecutive `0's and `1's, respectively, from the most significant bit (MSB) to the first opposite bit~\cite{C_LZD,C_LZA_LZD,C_LZC,C_4LZC}.

\smallskip
\noindent\textbf{Mathematical definition.}
Let $X\in\{0,1,\dots,2^m-1\}$ be an $m$-bit unsigned integer, written as $X=\sum_{i=0}^{m-1} b_i2^i$ ($b_i\in\{0,1\}$ and $b_{m-1}$ is the MSB). The leading-zero count is defined as
\[
\MATHFN{LZC}(X)\!=\!
\begin{cases}
m, &X \!=\! 0, \\
\min \{ \gamma \!\in\! \{0,\dots,m\} \!\mid\! b_{m-1-\gamma} \!=\! 1 \}, &X \!\neq\! 0,
\end{cases} 
\]
and the leading-one count (LOC) is defined as 
\[
\MATHFN{LOC}(X)\!=\!
\begin{cases}
m, & \!\!\!X \!=\! 2^m{-}1, \\
\min \{ \gamma \!\in\! \{0,\dots,m\} \!\mid\! b_{m-1-\gamma} \!=\! 0 \}, & \!\!\!X \!\neq\! 2^m{-}1.
\end{cases}
\]
The two functions are related by bitwise complementation:
\begin{equation}\label{eq:LZCeqLOC}
\MATHFN{LZC}(X)=\MATHFN{LOC}(\overline{X}),\qquad
\MATHFN{LOC}(X)=\MATHFN{LZC}(\overline{X}).
\end{equation}
For the running example,
\(\MATHFN{LZC}\big((0000\,0000\,0010\,1011)_2\big)=\MATHFN{LOC}\big((1111\,1111\,1101\,0100)_2\big)=10.\) 
\section{Methodology}\label{sec:Methodology}
This section details the proposed methodology for constructing the QLZC. 
Rather than a direct mapping of classical counter boolean logic into a reversible format, we reformulate the counting process to better suit the requirements of reversible synthesis. 
Based on this formulation, we derive a set of functional primitives that serve as the modular building blocks for our architecture. 
Leveraging these primitives, we develop a modular sequential QLZC design, followed by optimizations targeting $\TGATE$-efficiency and qubit reuse. 
Finally, we extend this construction into a Parallel QLZC (PQLZC) architecture through a hierarchical merge strategy, which enhances scalability and further reduces circuit depth. 
Furthermore, given the minimal logical disparity between the QLZC and QLOC, the latter can be easily transformed into a QLZC—or even a reconfigurable version—by simply appending $\GATE{X}$ or $\GATE{CX}$ gates to the input stage. This flexibility demonstrates a "polymorphism" property within our design, a detailed discussion of which is presented in Sec.~\ref{subsec:Discussion}.

\subsection{Modular Logic Reformulation for QLZC Construction}\label{subsubsec:LZC_Algorithm}

The leading-zero counter (LZC)~\cite{C_LZD,C_LZA_LZD,C_LZC,C_4LZC} identifies the position of the first `1' in a bit string and outputs the corresponding number of preceding `0's. 
A straightforward implementation may therefore be constructed by repeatedly detecting the leading-zero condition and updating the output through addition. 
However, directly translating such a classical realization into a quantum circuit by gate-level substitution does not fully exploit the advantages of reversible quantum operations. 
Existing quantum leading-zero detector designs~\cite{Q_LZD,Q_Fdivision_LZD,Q_Fadder_LZD,Q_Fadder_LZC} commonly follow a Boolean-to-quantum mapping flow, where the output bits are first expressed as Boolean formulas and then simplified to reduce resource cost. 
While this approach can produce compact circuits for a fixed input width, the resulting Boolean expressions lack structural regularity and vary significantly with the input size. Consequently, any change in bit width necessitates re-deriving and re-optimizing the entire logic, hindering scalable and modular extension.

In this work, we instead exploit the conditional bit-flip capability naturally provided by quantum controlled gates such as $\GATE{CX}$, $\GATE{CCX}$, and $\GATE{MCX}$. 
Rather than relying on complex Boolean synthesis, we reformulate the update step as a sequence of condition-triggered bit-toggles on the output register. This reformulation aligns with the systematic bit-flip patterns of binary incrementing, leading to a more resource-efficient and modular construction.
%
Since the leading-one counter (LOC) admits a more direct detection condition, we first construct an LOC-oriented algorithm (Alg.~\ref{alg:QOLZC}) and prove its correctness in Theorem~\ref{thm:QOLZC}. 
The LZC circuit is then obtained from \eqref{eq:LZCeqLOC} by bitwise complementation of the input before performing LOC.

\begin{figure*}[tb]
    \centering
    \begin{subfigure}[t]{0.5\linewidth}
        \centering
        \includegraphics[width=\linewidth]{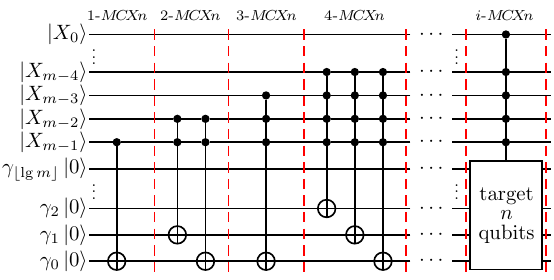}
        \caption{Original version.}
        \label{fig:iMCXn_gate}
    \end{subfigure}
    \hfill
    \begin{subfigure}[t]{0.19\linewidth}
        \centering
        \includegraphics[width=\linewidth]{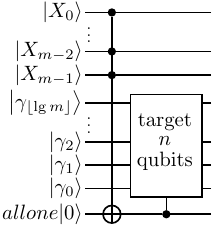}
        \caption{Modified version.}
        \label{fig:iMCXn_plus}
    \end{subfigure}
    \hfill
    \begin{subfigure}[t]{0.24\linewidth}
        \centering
        \includegraphics[width=\linewidth]{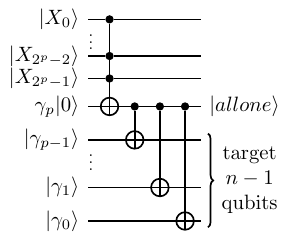}
        \caption{$2^p$ Optimized version.}
        \label{fig:MCXn_pow2}
    \end{subfigure}
    \caption{Quantum circuits of the $\GATE{i\text{-}MCXn}$ gate: (a) original version, (b) modified version utilizing an additional ancilla to store the $\GATE{MCX}$ result, and (c) optimized version for $i=2^p$.}
    \label{fig:iMCXn_all}
\end{figure*}

\begin{algorithm}[H]
\caption{Modular LOC (MLOC)}
\begin{algorithmic}[1]
  \STATE \textbf{Input:} $X$ — a \(m\)-bits binary array
  \STATE \textbf{Output:} $\gamma$ — a \((\lfloor \lg m\rfloor+1)\)-bits binary array, the leading one count of \(X\)

  \STATE \(\gamma \gets 0\)\quad \COMMENT{Initialize count}
    \STATE\hspace{0em}\textbf{for} \(i = 1\) to \(m\) \textbf{do }
    \STATE\hspace{1em}\textbf{if} \(X_{m\!-\!1}\!\land\!X_{m\!-\!2}\!\land\!\cdots\!\land\!X_{m\!-\!i}\) \textbf{then }
      \STATE\hspace{2em}\COMMENT{Update $\gamma$ from $i-1$ to $i$ by flipping}
      \STATE\hspace{2em}\(\Delta \leftarrow (i-1) \oplus i\) \COMMENT{Identify bits to be flipped}
      \STATE\hspace{2em}\textbf{for} each bit index \(j\) where \(\Delta_j = 1\) \textbf{do} flip \(\gamma_j\)

  \RETURN \(\gamma\)
\end{algorithmic}
    \label{alg:QOLZC}
\end{algorithm}

Specifically, for an $m$-bit input, the detection of an $i$-bit leading-one prefix can be expressed by the Boolean condition
\(
X_{m\!-\!1}\land X_{m\!-\!2}\land \cdots \land X_{m\!-\!i}.
\) 
Instead of updating the count register through explicit addition, the corresponding update in that round is realized by flipping the output bits that must change under this condition. 
Specifically, the bitwise difference between $i-1$ and $i$ is captured by $\Delta = (i-1) \oplus i$, where each bit $\Delta_j = 1$ indicates that the $j$-th bit of $\gamma$ must be flipped. According to Lemma~\ref{lem:bit_flip_logic} identifies that the bits to be flipped are precisely the $n$ LSBs of $\gamma$, where $n \in \mathbb{N}_1$ is the unique integer satisfying $i \bmod 2^n = 2^{n-1}$. 
Since all bit flips in the same round share the same control condition, they can be grouped into a systematic $\GATE{MCX}$-based construction. 
We package this operation as a reusable subcomponent, referred to as the $\GATE{i\text{-}MCXn}$ gate, which serves as a basic building block in the proposed circuit design.

\begin{lemma} \label{lem:bit_flip_logic}
For any $i \in \mathbb{N}_1$, the bitwise difference between $i-1$ and $i$ is characterized by $\Delta = (i-1) \oplus i$. This difference $\Delta$ consists of exactly $n$ LSBs being $1$ (i.e., $\Delta = 2^n - 1$), where n is the unique integer satisfying $i \bmod 2^n = 2^{n-1}$.
\end{lemma}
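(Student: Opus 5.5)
We will argue directly from the binary expansion of $i$, using the 2-adic valuation. Since $i \ge 1$, write $i = 2^{n-1}\cdot u$ with $u$ odd and $n \in \mathbb{N}_1$; that is, $n-1$ is the number of trailing zeros of $i$. In binary, $i$ therefore has the form $(\beta\,1\,\underbrace{0\cdots0}_{n-1})_2$ for some (possibly empty) prefix $\beta$. Subtracting $1$ borrows through the $n-1$ trailing zeros and clears the $1$ at position $n-1$. Hence $i-1 = (\beta\,0\,\underbrace{1\cdots1}_{n-1})_2$, with the same prefix $\beta$. To make this rigorous without an informal "borrow" argument, we write $i = 2^n q + 2^{n-1}$ with $q=(u-1)/2$. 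Then
\[
i-1 = 2^n q + (2^{n-1}-1),
\]
and both remainders $2^{n-1}$ and $2^{n-1}-1$ lie in $[0,2^n)$. So bits $\ge n$ of $i$ and of $i-1$ both equal the bits of $q$, while the low $n$ bits are $10\cdots0$ and $01\cdots1$ respectively.

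XOR acts bitwise. The high parts cancel, and the low $n$ bits give $10\cdots0 \oplus 01\cdots1 = 1\cdots1$. Hence $\Delta = 2^n-1$, which consists of exactly $n$ ones in the LSBs.

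It remains to tie this $n$ to the characterization $i \bmod 2^n = 2^{n-1}$ and show uniqueness.

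\textbf{Existence.} From $i = 2^n q + 2^{n-1}$ with $0\le 2^{n-1}<2^n$, we get $i \bmod 2^n = 2^{n-1}$.

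\textbf{Uniqueness.} Suppose $i \bmod 2^{k} = 2^{k-1}$ for some $k\in\mathbb{N}_1$. Then $i = 2^k r + 2^{k-1} = 2^{k-1}(2r+1)$, so $2^{k-1}$ exactly divides $i$. Uniqueness of the 2-adic valuation then forces $k-1=n-1$, i.e. $k=n$.

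The only delicate step is the bit-level identification of $i-1$ with the same high part $q$. Stating it via the division-with-remainder decomposition above, rather than via informal borrow propagation, avoids any gap. Everything else is routine.
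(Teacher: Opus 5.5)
Your proof is correct and follows essentially the same route as the paper: you take $n-1$ to be the position of the lowest set bit of $i$, observe that $i-1$ agrees with $i$ on all bits from position $n$ upward while its low $n$ bits change from $10\cdots0$ to $01\cdots1$, and XOR to obtain $\Delta = 2^n-1$. Your division-with-remainder form $i = 2^n q + 2^{n-1}$ makes the paper's informal borrow-propagation step rigorous, and your uniqueness argument via the 2-adic valuation supplies a point the paper only asserts.
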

\begin{proof}
Write $i$ in binary as \( i = \sum_{j=0}^{\infty} b_j 2^j \), where \( b_j \in \{0, 1\} \). Let \( n-1 \) be the index of the first non-zero bit, i.e., \( b_0 = b_1 = \dots = b_{n-2} = 0 \) and \( b_{n-1} = 1 \); thus, \( i=(\dots b_n 1 0 \dots 0)_2 \). Truncating $i$ to its $n$ LSBs yields \( (10\dots0)_2 = 2^{n-1} \), which satisfies the remainder condition \( i \bmod 2^n = 2^{n-1} \).

To obtain $i-1$, subtracting $1$ from $i$ triggers a borrow propagation that turns the '1' at index \( n-1 \) into '0' and flips all \( n-1 \) trailing zeros into '1's, resulting in \( i-1 = (\dots b_n 0 1 \dots 1)_2 \). 
The bitwise difference between the two, obtained via ( $\Delta = i \oplus (i-1)$ ), reveals that the bits differ precisely at the lowest ( $n$ ) positions. This confirms that ( $\Delta = (0\dots01\dots1)_2$ ), where exactly ( $n$ ) LSBs are set to $1$, identifying the bits to be flipped.
\end{proof}

\begin{theoremrep}[Correctness of the MLOC]\label{thm:QOLZC}
Let $X = (b_{m-1}, b_{m-2}, \dots, b_0)$ be an $m$-bit binary array. Algorithm~\ref{alg:QOLZC} correctly computes the leading zero count $\gamma = \MATHFN{LZC}(X)$.
\end{theoremrep}
\begin{proof}
Let $L = \MATHFN{LOC}(X)$. The condition in Line~5 holds iff $i \le L$, as the first non-one bit $b_{m-1-L}$ triggers a failure for all $i > L$. Thus, the update block (Lines 6) is executed $L$ times. 
By Lemma~\ref{lem:bit_flip_logic}, each $i$-th iteration identifies the bitwise difference $\Delta = (i-1) \oplus i$, which corresponds to flipping exactly $n$ LSBs of the counter, where $n$ is the unique integer satisfying $i \bmod 2^n = 2^{n-1}$. This ensures that $\gamma$ is correctly updated from $i-1$ to $i$ at each step. Starting from the initial state $\gamma = 0$, the $L$ successive increments result in a final value of $\gamma$ that accurately represents the binary encoding of $L$. 
\end{proof}

\subsection{Definition of the \texorpdfstring{$\GATE{i\text{-}MCXn}$}{i-MCXn} gate}\label{subsubsec:DEF_iMCXn}
As previously discussed, each conditional bit-flip iteration is structured into a systematic $\GATE{MCX}$-based construction, denoted as the $\GATE{i\text{-}MCXn}$ gate shown in Fig.~\ref{fig:iMCXn_gate}. According to Lemma~\ref{lem:bit_flip_logic}, this gate models one update step of the proposed counting procedure by flipping the $n$ LSBs of the target state $\ket{\gamma}$ if and only if $\bigwedge_{t=m-i}^{m-1} X_t = 1$, where $n$ is the unique integer satisfying $i \bmod 2^n = 2^{n-1}$; otherwise, it acts as the identity:
\[
\begin{aligned}
&\GATE{i\text{-}MCXn}\!\left(\ket{X_{m-1:m-i}}\!\ket{\gamma}\right) \\
&\!=\!
\ket{X_{m-1:m-i}}
\!\otimes\!
\MATHFN{ITE}\!\left(
\wedge_{t=m-i}^{m-1} X_t
,
\ket{\gamma_{\lfloor\lg m\rfloor:n}}
\ket{\overline{\gamma_{n-1:0}}},
\ket{\gamma}
\right) \\
&\!=\!
\ket{X_{m-1:m-i}}
\ket{\gamma_{\lfloor\lg m\rfloor:n}}
\otimes_{t=0}^{n-1}
\GATE{MCX}\!\left(
\ket{X_{m-1:m-i}}
,\;\ket{\gamma_t}
\right).
\end{aligned}
\]

This gate is realized by multiple $\GATE{MCX}$ gates sharing a common control condition, with each gate targeting a distinct qubit among the lowest $n$ bits.
Furthermore, per Lemma~\ref{lem:bit_flip_logic}, at round $i$ where $\ket{\gamma}=\ket{i-1}$ (equivalently, the prefix bits $X_{m-1}, \dots, X_{m-i-1}$ are all $1$), the $\GATE{i\text{-}MCXn}$ gate matches an increment-by-one guarded by the carry condition. Specifically, if $X_{m-i}=1$ such that $\bigwedge_{t=m-i}^{m-1} X_t = 1$, the gate increments the target register by one ($\ket{i-1}\to\ket{i}$); otherwise, it acts as the identity:
\begin{equation}\label{eq:iMCXn_add1}
\begin{aligned}
&\GATE{i\text{-}MCXn}\!\left(
\ket{X_{m-1:m-i}}
\ket{\gamma}\right) \\
&=
\ket{X_{m-1:m-i}}
\otimes
\MATHFN{ITE}\!\left(
\wedge_{t=m-i}^{m-1} X_t,\;
\ket{\gamma+1},\;
\ket{\gamma}
\right) \\
&=
\ket{X_{m-1:m-i}}
\otimes
\MATHFN{ITE}\!\left(
\wedge_{t=m-i}^{m-1} X_t,\;
\ket{i},\;
\ket{\gamma}
\right).
\end{aligned}
\end{equation}

However, the depth of multiple $\GATE{MCX}$ gates scales rapidly with increasing control counts. To mitigate this, we utilize an additional ancilla to store the $\GATE{MCX}$ result, subsequently triggering $\GATE{CX}$ gates to flip the remaining target qubits (Fig.~\ref{fig:iMCXn_plus}). This strategy significantly reduces overall depth and enhances architectural scalability.
As shown in Fig.~\ref{fig:MCXn_pow2}, for the power-of-two case $i = 2^p$
, the $\GATE{i\text{-}MCXn}$ gate targets qubits from index 0 to $p$. Since preceding stages ($i < 2^p$) only utilize qubits up to index $p-1$, the $p$-th qubit can be directly used to store the all-one control result and subsequently trigger the flips of the remaining $n-1$ target qubits. This in-place register recycling effectively eliminates the need for an external ancilla.

\subsection{Quantum Leading One Counter and Leading Zero Counter}
\begin{figure}[tb]
    \centering
    \includegraphics[width=\linewidth]{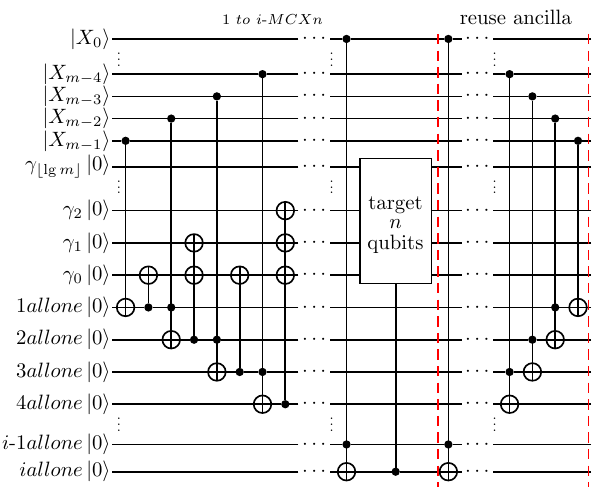}
    \caption{Quantum circuit of a Leading One Counter (QLOC)}
    \label{fig:MineLOC}
\end{figure}

\begin{figure*}[tb]
    \centering
    \begin{subfigure}[t]{0.32\linewidth}
        \centering
        \vspace{-85 pt}
        \includegraphics[width=\linewidth]{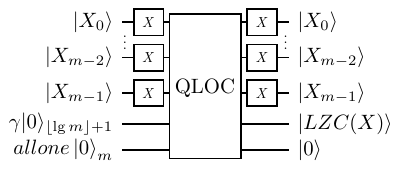}
        \caption{Original QLZC}
        \label{fig:MineLZC}
    \end{subfigure}
    \hfill
    \begin{subfigure}[t]{0.67\linewidth}
        \centering
        \includegraphics[width=\linewidth]{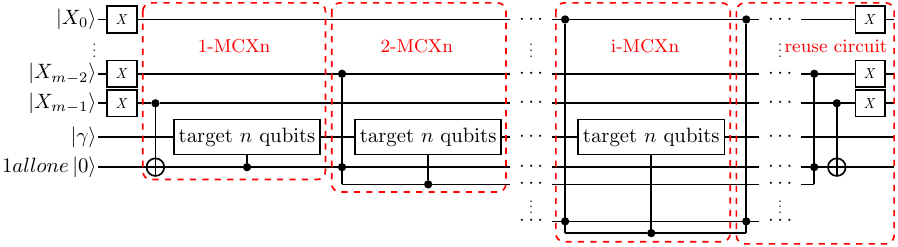}
        \vspace{-20 pt}
        \caption{TA-OP QLZC}
        \label{fig:mixed_Nop_LZC}
    \end{subfigure}
    \caption{Quantum Leading Zero Counter (QLZC). (a) Original QLZC. (b) TA-OP QLZC optimized with $\GATE{T\text{-}AND}$ gates.}
    \label{fig:qlzc_compare}
\end{figure*}

As illustrated in Fig.~\ref{fig:MineLOC}, the proposed Quantum Leading-One Counter (QLOC) is constructed through a sequential application of $\GATE{i\text{-}MCXn}$ gates for $i = 1, \dots, m$, and its correctness is proved in Theorem~\ref{thm:QLOC_correctness}. 
Let $X$ denote the $m$-qubit input, and let $\ket{\gamma}$ be an accumulator register of $(\lfloor \lg m \rfloor + 1)$ qubits initialized to $\ket{0}$. 
Per Eq.~\eqref{eq:iMCXn_add1}, each $\GATE{i\text{-}MCXn}$ gate increments $\gamma$ if and only if the $i$ most significant input bits are all ones (i.e., $\bigwedge_{t=m-i}^{m-1} X_t = 1$). 
To minimize depth and complexity, we employ incremental construction for the control logic. Instead of full multi-controlled operations, the $i$-th all-one flag is derived from the $(i-1)$-stage result and $X_{m-i}$ via a single $\GATE{CCX}$ gate:
\[
\begin{aligned}
&\GATE{MCX}(
\ket{X_{m-1:m-i}}
\ket{0})\\
&= 
\ket{X_{m-1:m-i}}
\ket{0 \oplus \wedge_{t=m-i}^{m-1} X_t} = 
\ket{X_{m-1:m-i}}
\ket{i\text{allone}} \\
&= 
\ket{X_{m-1:m-i}}
\ket{0 \oplus (0 \oplus \wedge_{t=m-i-1}^{m-1} X_t)(X_{m-i})} \\
&= 
\ket{X_{m-1:m-i+1}}
\GATE{CCX}\big(\ket{X_{m-i}}\ket{i{-}1\text{allone}}\ket{0}\big).
\end{aligned}
\]
This recursive scheme replaces repeated $\GATE{MCX}$ operations with a Toffoli ladder. To reclaim ancillas, we uncompute these flags by reversing the $\GATE{CCX}$ sequence. Notably, such uncomputation is not directly applicable for the $i = 2^p$ optimization (Fig.~\ref{fig:MCXn_pow2}).

\begin{theoremrep}[Correctness of the QLOC]\label{thm:QLOC_correctness}
The sequential circuit construction of $\GATE{i\text{-}MCXn}$ gates correctly implements the Leading-One Detection function: 
\[
\CKTFN{QLOC}\big(\ket{X}_m\ket{0}_{\lfloor\lg m\rfloor+1}\big)
\;=\;
\ket{X}\ket{\MATHFN{LOC}(X)}
\;=\;
\ket{X}\ket{\gamma}\,.
\]
\end{theoremrep}
\begin{proof}
The proof establishes a direct mapping between the QLOC circuit and the MLOC algorithmic logic. By Eq.~\eqref{eq:iMCXn_add1}, the $i$-th $\GATE{i\text{-}MCXn}$ gate executes the $i$-th conditional update block: mapping $\ket{\gamma} \mapsto \ket{i}$ if the prefix condition $\bigwedge_{t=m-i}^{m-1} X_t$ holds, and $\ket{\gamma} \mapsto \ket{\gamma}$ otherwise. 
Since the $\GATE{i\text{-}MCXn}$ gate specifically decomposes into the bit-flipping sequence as described by Lemma~\ref{lem:bit_flip_logic}, its hardware execution is functionally equivalent to the iterative process in Algorithm~\ref{alg:QOLZC}. Given that these conditional increments are triggered sequentially from $i=1$ and cease at the first occurrence of $X_{m-i}=0$, the cumulative value stored in the register $\gamma$ is guaranteed to represent the total number of prefix ones. Consequently, the correctness of the QLOC circuit is established by its direct correspondence to the proven MLOC construction.
\end{proof}

Finally, the QLZC is constructed by sandwiching the QLOC circuit between layers of $\GATE{X}$ gates (Fig.~\ref{fig:MineLZC}). This configuration leverages the duality $\MATHFN{LZC}(X) = \MATHFN{LOC}(\bar{X})$, where $\ket{\bar{X}}$ is the bit-wise inversion of $\ket{X}$. The resulting transformation is:
\[
\begin{aligned}
&\CKTFN{QLZC} \big( \ket{X}_m \ket{0}_{\lfloor\lg m\rfloor+1} \big) \\
&= (\GATE{X}^{\otimes m} \otimes \GATE{I}) \cdot \CKTFN{QLOC} \cdot (\GATE{X}^{\otimes m} \otimes \GATE{I}) \left( \ket{X}_m \ket{0} \right) \\
&= (\GATE{X}^{\otimes m} \otimes \GATE{I}) \cdot \CKTFN{QLOC} \left( \ket{\bar{X}}_m \ket{0} \right) \\
&= (\GATE{X}^{\otimes m} \otimes \GATE{I}) \left( \ket{\bar{X}}_m \ket{\MATHFN{LOC}(\bar{X})} \right) 
= \ket{X}_m \ket{\MATHFN{LZC}(X)}.
\end{aligned}
\]

\subsection{Optimization via Temporary Logical-AND ($\GATE{T\text{-}AND}$).}\label{subsubsec:TAND}
In fault-tolerant quantum computing, $\TGATE$ gates are significantly more resource-intensive than Clifford gates. Consequently, minimizing both $\TGATE$-count and $\TGATE$-depth is critical for achieving a $\TGATE$-efficient design. 
To this end, by incorporating the concept of $\GATE{T\text{-}AND}$ utilized in~\cite{Q_LZD, Q_Fdivision_LZD}(see Sec.~\ref{subsec:Temporary-logical-AND-gate}), we group each all-zero detection circuit with its corresponding reuse logic and implement them using $\GATE{T\text{-}AND}$, optimizing the overall quantum circuit structure, as shown in Fig.~\ref{fig:mixed_Nop_LZC}. For the $i = 2^p$ $\GATE{i\text{-}MCXn}$ gate, the same optimization can be applied. 
Such designs employing this technique are labeled “TA-OP” to distinguish them from conventional implementations.

\begin{figure*}[bt]
    \centering
    \begin{minipage}[t]{0.64\linewidth}
    \vspace{0pt}
    \centering
        \begin{subfigure}[t]{0.485\linewidth}
            \centering
            \includegraphics[width=\linewidth]{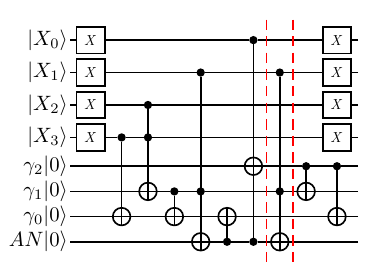}
            \caption{P-OP 4-QLZC}
            \label{fig:Mine4LZC_2}
        \end{subfigure}
        \hfill
        \begin{subfigure}[t]{0.40\linewidth}
            \centering
            \includegraphics[width=\linewidth]{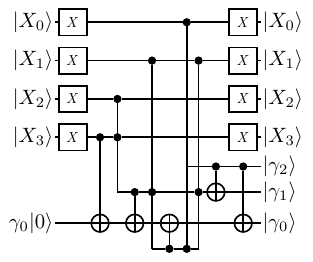}
            \caption{TA-P-OP 4-QLZC}
            \label{fig:mixed_Op_4LZC}
        \end{subfigure}
        \caption{4-qubit QLZC. (a) 4-QLZC utilizing optimized $\GATE{2^p\text{-}MCXn}$ gates, where the red lines indicate the reused ancilla. (b) further optimized with $\GATE{T\text{-}AND}$ gates.}
        \label{fig:4bit_qlzc}
    \end{minipage}
    \hfill 
    \begin{minipage}[t]{0.28\linewidth}
    \vspace{4.6pt}
        \centering
        \includegraphics[width=\linewidth]{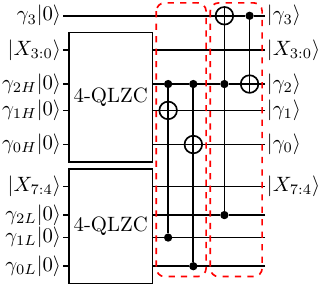}
        \caption{8-qubit PQLZC constructed by the merge circuit}
        \label{fig:8PLZC}
    \end{minipage}
\end{figure*}

\subsection{Parallelized Quantum Leading Zero Counter (PQLZC) }
Although our proposed circuit already significantly improves upon existing designs, its depth still increases linearly with the number of input qubits. Therefore, parallelization is essential. 
We adopt a strategy similar to that of~\cite{Q_LZD}, using a 4-qubit grouped architecture to construct our PQLZC. 
First, We construct a 4-bit QLZC utilizing optimized $\GATE{2^p\text{-}MCXn}$ gates as shown in Fig.~\ref{fig:Mine4LZC_2}. Notably, the $\GATE{3\text{-}MCXn}$ reuse logic must precede the $\GATE{4\text{-}MCXn}$ flip operation, as the target qubits of the latter overlap with the control lines of the former. Integrating the $\GATE{T\text{-}AND}$ method (Sec.~\ref{subsubsec:TAND}) further refines this architecture into the $\GATE{T\text{-}AND}$ based 4-QLZC design illustrated in Fig.~\ref{fig:mixed_Op_4LZC}.

Next, we utilize a merge circuit to construct an 8-qubit QLZC, a process that can be recursively extended to larger input sizes as illustrated in Fig.~\ref{fig:8PLZC} and \ref{fig:TA_PLZC}, and prove its correctness in Theorem~\ref{thm:QLZC_merge_correctness}.  
Since the QLZC is designed to count leading zeros, the low-order QLZC value is added only when the high-order subcircuit detects an all-zero input. While this behavior could be implemented using a controlled adder, we optimize the circuit by leveraging the specific property of an $m$-qubit QLZC (where $m=2^p$): if the input is all-zero, the output is $\gamma = m$ (binary $10\dots0$), meaning only the MSB is 1 (as the all-zero flag). Consequently, the controlled addition can be simplified using only $\lg(m)+1$ Toffoli ($\GATE{CCX}$) gates and one $\GATE{CX}$ gate, significantly reducing the required resources. 
Finally, for the output bit $\gamma_{\log_2(2m)}$, we employ the $\GATE{T\text{-}AND}$ (optimized Toffoli) gate to refine the datapath performance.

\begin{theoremrep}[Correctness of the QLZC Merge Circuit]\label{thm:QLZC_merge_correctness}
The recursive construction using the merge circuit correctly implements the Quantum Leading Zero Detection function. Specifically, for an $2m$-qubit input, the circuit realizes:
\[
\begin{gathered}
\CKTFN{PQLZC}\big(\ket{X}_{2m}\ket{0}_{\lg(2m)+1}\big)
\!=\!
\ket{X}\ket{\gamma}\!,\;
\\
\ket{\gamma}
\!=\!\ket{\MATHFN{ITE}(\gamma_{MSBH},\gamma_H \!+\! \gamma_L,\gamma_H)} 
\!=\!\ket{\MATHFN{LZC}(X)}\!,
\end{gathered}
\]
in which $\gamma_{H}$ and $\gamma_{L}$ are the outputs of the high- and low-order $m$-qubit QLZC subcircuits, respectively, and $\gamma_{MSBH}$ is the most significant bit of $\gamma_{H}$.
\end{theoremrep}
\begin{proof}
The construction follows from the recursive definition of the LZC function. For a $2m$-qubit input $X$, let $X_H$ and $X_L$ be the high- and low-order $m$-qubit halves, respectively. The total leading zero count $\gamma$ is $m + \gamma_L$ if $X_H = 0$, and $\gamma_H$ otherwise. 
In our design, the $m$-qubit subcircuit satisfies a boundary property where output $\gamma_{H(L)}=\ket{m}$ is encoded as $\ket{10\dots0}$ (only MSB $\gamma_{MSBH(L)} = 1$).
Consequently, the arithmetic addition $\gamma_H + \gamma_L$ simplifies to a controlled selection ($\MATHFN{ITE}$) logic applied across the output registers, as addition is only non-trivial when $\gamma_{MSBH}=1$:
\[ 
\begin{aligned}
\ket{\gamma} 
&=\ket{\MATHFN{ITE}(\gamma_{MSBH},\gamma_H\!+\!\gamma_L,\gamma_H)} 
\\&= \ket{\MATHFN{ITE}(\gamma_{MSBH}, \gamma_{MSBH}\!+\!\gamma_{MSBL}, \gamma_{MSBH})} \\&\quad\otimes \ket{\MATHFN{ITE}(\gamma_{MSBH}, 0\!+\!\gamma_{QL}, \gamma_{QH})},
\end{aligned}
\]
in which $\gamma_{MSBH,L}$ denote the MSBs, and $\gamma_{QH,L}$ denote the sub-registers of remaining bits. The implementation is verified in two functional blocks:

\smallskip\noindent \textbf{a) MSB and Overflow Logic ($\gamma_{\lg m}$ and $\gamma_{\lg 2m}$):}
The two highest bits of the output handle the arithmetic carry, implementing the transformation $\ket{\gamma_{\lg 2m}}\ket{\gamma_{\lg m}} = \ket{\MATHFN{ITE}\;(\gamma_{\lg mH}, \gamma_{\lg mH} + \gamma_{\lg mL}, \gamma_{\lg mH})}$. 
First, we define $\gamma_{\lg 2m} = \gamma_{\lg mH} \land \gamma_{\lg mL}$, which corresponds to the case where both subcircuits report an all-zero input. For the qubit $\gamma_{\lg m}$, the selection logic can be expanded as:
\[
\begin{aligned}
\gamma_{\lg m} &= \MATHFN{ITE}(\gamma_{\lg mH},\ \gamma_{\lg mH} + \gamma_{\lg mL},\ \gamma_{\lg mH}) 
\\&= \MATHFN{ITE}(\gamma_{\lg mH},\ \neg \gamma_{\lg mL},\ \gamma_{\lg mH})
\\&= \MATHFN{ITE}(\gamma_{\lg mH} \land \gamma_{\lg mL},\ \neg\gamma_{\lg mH},\ \gamma_{\lg mH}) 
\\&= \MATHFN{ITE}(\gamma_{\lg 2m},\ \neg\gamma_{\lg mH},\ \gamma_{\lg mH}),
\end{aligned}
\]
so $\gamma_{\lg m}=\gamma_{\lg 2m} \oplus \gamma_{\lg mH}$. 
Thus, the logic is realized by a $\GATE{CCX}(\gamma_{\lg m H}, \gamma_{\lg m L}, \gamma_{\lg 2m})$ gate followed by a $\GATE{CX}(\gamma_{\lg 2m}, \gamma_{\lg m H})$ gate. 
The $\GATE{CCX}$ gate is further optimized using a $\GATE{T\text{-}AND}$ operation to reduce resource overhead. 

\smallskip\noindent \textbf{b) Register-wide Selection Logic ($\gamma_Q$):}
For the remaining bits $i \in \{1, \dots, \lg m\!-\!1\}$, the merge circuit implements the conditional selection $\gamma_i = \MATHFN{ITE}(\gamma_{MSBH}, \gamma_{iL}, \gamma_{iH})$. 
Applying a $\GATE{CCX}$ to $\ket{\gamma_{MSBH}}\ket{\gamma_{iL}}\ket{\gamma_{iH}}$ yields the transformation $\gamma_{iH} \leftarrow \gamma_{iH} \oplus (\gamma_{MSBH} \land \gamma_{iL})$. We verify this implementation by expanding the $\GATE{XOR}$-$\GATE{AND}$ logic:
\[
\begin{aligned}
&\gamma_{iH} \oplus (\gamma_{MSBH} \land \gamma_{iL}) \\&= \MATHFN{ITE}(\gamma_{MSBH} \land \gamma_{iL},\ \neg \gamma_{iH},\ \gamma_{iH}) \\
&= \MATHFN{ITE}(\gamma_{MSBH},\ \MATHFN{ITE}(\gamma_{iL},\ \neg \gamma_{iH},\ \gamma_{iH}),\ \gamma_{iH}).
\end{aligned}
\]
By the all-zero property: if $\gamma_{MSBH} = 1$, the input is all-zero, forcing $\gamma_{QH} = 0$. Thus, $\MATHFN{ITE}(\gamma_{iL}, \neg \gamma_{iH}, \gamma_{iH})$ simplifies to $\MATHFN{ITE}(\gamma_{iL}, 1, 0) = \gamma_{iL}$ when $\gamma_{MSBH}$ is high, yielding:
\begin{equation}\label{eq:QH}
\gamma_{iH} \oplus (\gamma_{MSBH} \land \gamma_{iL}) = \MATHFN{ITE}(\gamma_{MSBH},\ \gamma_{iL},\ \gamma_{iH}).
\end{equation}
Therefore, sequential $\GATE{CCX}$ gates applied across the sub-registers correctly implement the multiplexing logic required for the recursive QLZC merge.
\end{proof}

\begin{figure}[bt]
    \centering
        \begin{subfigure}[t]{\linewidth}
        \centering
        \includegraphics[width=0.8\linewidth]{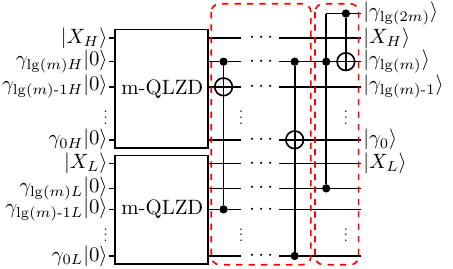}
        \caption{TA-OP PQLZC}
        \label{fig:TA_PLZC}
    \end{subfigure}
    \hfill
    \begin{subfigure}[t]{\linewidth}
        \centering
        \includegraphics[width=0.95\linewidth]{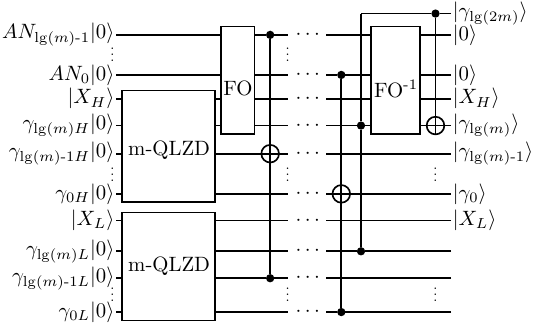}
        \caption{FO-TA-OP PQLZC}
        \label{fig:FO_PLZC}
    \end{subfigure}
    \caption{$2m$-qubit PQLZC ($m = 4 \cdot 2^p$). Internal $m$-QLZC qubits omitted for brevity. (a) optimized with $\GATE{T\text{-}AND}$ gates. (b) further optimized with fanout control.}
    \label{fig:mixed_PLZC}
\end{figure}

Moreover, all $\GATE{CCX}$ and $\GATE{T\text{-}AND}$ gates in the merge circuit share only the same control qubit, \(\gamma_{\lg(m)-1,H}\), whereas the remaining control and target qubits are mutually exclusive. Hence, this common control qubit can be fanned out to ancilla so that these gates can be executed in parallel, reducing the $\TGATE$-depth to a constant, i.e., the $\TGATE$-depth of a single $\GATE{CCX}$ gate, as shown in Fig.~\ref{fig:FO_PLZC}. We adopt a standard binary-tree fan-out subroutine to copy a control qubit to \(n\) ancilla qubits in logarithmic depth (see Alg.~\ref{alg:FO}). 
In addition, the optimized \(2m\)-PQLZC (\(m = 4 \cdot 2^p
\)) requires
\(
\frac{2m}{8}\cdot 2=\mathcal{O}(2m)
\)
additional fan-out ancilla. In the first merge round, the \(4\)-QLZC blocks are merged pairwise, yielding \(2m/8\) merge groups, and each group requires three identical control qubits in total, i.e., the original one and two ancilla copies. In the \(r\)-th merge round, each group requires \(r+2\) identical control qubits in total, so \(r+1\) ancilla are needed, while the number of groups decreases to \(2m/2^{r+2}\). Hence, the required number of additional ancilla in round \(r\) is
\(
\frac{2m}{2^{r+2}}(r+1),
\)
which decreases as \(r\) increases. Therefore, the first merge round requires the largest number of fan-out ancilla. 
Furthermore, with the reverse fan-out circuit \(\mathrm{FO}^{-1}\), these ancilla can be uncomputed and reused in every round. Therefore, the ancilla required in the first round also determine the total number of additional ancilla for the whole circuit. 
The only difference from Theorem~\ref{thm:QLZC_merge_correctness} is that, in \eqref{eq:QH}, the control qubit is replaced by the fan-out ancilla. That is,
\[
\gamma_{iH} \oplus (AN_i \land \gamma_{iL})
= \MATHFN{ITE}(\gamma_{MSBH},\gamma_{iL},\gamma_{iH}).
\]
Since \(\mathrm{FO}\) ensures \(AN_i=\gamma_{MSBH}\) for all \(i\), the above equality still holds. Moreover, the reverse \(\mathrm{FO}\) is applied after the $\GATE{T\text{-}AND}$ gate for \(\gamma_{\lg 2m}\) is executed. Hence, the fan-out operation does not affect the behavior of the original circuit.

\begin{algorithm}[H]
\caption{Fan out (FO)}
\label{alg:FO}
\begin{algorithmic}[1]
\STATE \textbf{Input:} $1$-qubit control $\ket{ctrl}$; $n$-qubit ancilla $AN\ket{0}_n$; 
\STATE \textbf{Output:}  $AN'=\bigotimes_{a=n-1}^{0}\ket{ctrl}$
\STATE \(Q \gets [AN_{n-1}, \dots, AN_1, AN_0,ctrl ]\)
\STATE \hspace{1em}\textbf{for } $i = 0,1,\dots,\,\lceil \lg (n{+}1) \rceil - 1$ \textbf{do }
\STATE \hspace{2em}\textbf{for } $j = 0,1,\dots,\,2^i - 1$ \textbf{do }
\STATE \hspace{3em}\textbf{if } $j + 2^i \le n$ \textbf{then }$\GATE{CX}$\((Q_j, Q_{j+2^i})\)
\RETURN $AN'$
\end{algorithmic}
\end{algorithm}

\begin{lemmarep}[Standard property of FO]
\label{lem:fanout}
Given one control \(\ket{ctrl}\) and \(n\)-qubits ancilla $\ket{AN}$ initialized to \(\ket{0}\), FO copies \(\ket{ctrl}\) to every \(AN_\ell\) for \(0\le \ell\le n-1\).
The depth of FO is \(\lceil \lg(n+1)\rceil\).
\end{lemmarep}
\begin{proof}
Let \(Q=(Q_n,\dots,Q_0)=(AN_{n-1},\dots,AN_0,ctrl)\), so initially only \(Q_0\) stores \(\ket{ctrl}\). 
This follows from the standard doubling property of binary-tree fan-out.
In round \(i\), each gate \(\GATE{CX}(Q_j,Q_{j+2^i})\) copies \(\ket{ctrl}\) from an existing qubit to a fresh one, so the number of qubits storing \(\ket{ctrl}\) doubles, up to \(n+1\).
Hence, after \(\lceil \lg(n+1)\rceil\) rounds, all ancilla qubits store \(\ket{ctrl}\).
Since all $\GATE{CNOT}$ gates within the same round are disjoint, each round has depth \(1\), and thus the total depth is \(\lceil \lg(n+1)\rceil\).
\end{proof}

\subsection{Discussion on Functional Polymorphism of QLZOC}\label{subsec:Discussion}
Although the subsequent sections primarily focus on the implementation and evaluation of the QLZC, the proposed architectural framework is inherently applicable to the QLOC designs. As illustrated in Fig.~\ref{fig:qlzc_compare}, the structural difference between these configurations is minimal: implementing a QLZC using the QLOC core logic merely requires a bitwise $\GATE{X}$ gate layer for input complementation, followed by an identical uncomputation step after the leading-zero position is identified. Conversely, a native QLOC implementation eliminates these negation overheads, reflecting the functional polymorphism of our architectural design. 
Consequently, the resource complexities and optimization strategies derived in this work—such as hierarchical merging and fan-out control—remain applicable to both configurations without loss of generality.

This inherent functional polymorphism allows for a reconfigurable version of the QLZOC to be straightforwardly implemented by replacing static $\GATE{X}$ gates with $\GATE{CX}$ gates, all triggered by a single mode-selection qubit. This modification allows the circuit to dynamically toggle between zero and one detection modes during runtime, providing a versatile functional primitive for complex quantum arithmetic pipelines.

\begin{table}[b]
\centering
\begin{tabular}{l c c c c c c c}
\hline
\textbf{Circuit} & \!\!\textbf{$\TGATE$-count\!\!} & \!\!\textbf{$\TGATE$-depth}\!\! & \!\!\textbf{\#Ancilla}\!\!  & \!\!\textbf{Width}\!\!  & \!\!\textbf{Depth}\!\!\\\hline
\!\cite{Q_LZD} circuit 1 (c1) & 24       & 12       & 3        & 6        & 64       \\
\!\cite{Q_LZD} circuit 2 (c2) & 20       & 10       & 2        & 5        & 61       \\
\!\cite{Q_LZD} circuit 3 (c3) & 33       & 15       & 1        & 4        & 67       \\\hline \hline
P-OP 4-QLZC                  & 28       & 12       & 1        & 4        & 42       \\
\% imp. over c2              & \!\!\TCfoVtw\!\! & \!\!\TDfoVtw\!\! & \!\!\AfoVtw\!\!  & \!\!20.00\%\!\!  & \!\!\DfoVtw\!\!  \\
\% imp. over c3              & \!\!\TCfoVtr\!\! & \!\!\TDfoVtr\!\! & \!\!\AfoVtr\!\!  & \!\! 0.00\%\!\!  & \!\!\DfoVtr\!\!  \\\hline \hline
TA-P-OP 4-QLZC               & 12       & 4        & 1        & 4        & 32       \\
\% imp. over c2              &\!\!\TCTfoVtw\!\! &\!\!\TDTfoVtw\!\! & \!\!\ATfoVtw\!\! & \!\!20.00\%\!\!  & \!\!\DTfoVtw\!\! \\
\% imp. over c3              &\!\!\TCTfoVtr\!\! &\!\!\TDTfoVtr\!\! & \!\!\ATfoVtr\!\! & \!\! 0.00\%\!\!  & \!\!\DTfoVtr\!\! \\\hline \hline
TA-OP 4-QLZC                 & 12       & 4        & 4        & 7        & 47       \\
\% imp. over c2              &\!\!\TCTNfoVtw\!\!&\!\!\TDTNfoVtw\!\!& \!\!\ATNfoVtw\!\!&\!\!-40.00\%\!\!  & \!\!\DTNfoVtw\!\!\\
\% imp. over c3              &\!\!\TCTNfoVtr\!\!&\!\!\TDTNfoVtr\!\!& \!\!\ATNfoVtr\!\!&\!\!-75.00\%\!\!  & \!\!\DTNfoVtr\!\!\\
\hline   
\end{tabular}
\caption{Comparison of $4$-qubit input QLZC designs.}
\label{tab:4bit_comparison}
\end{table}

\begin{table}[b]
\centering
\begin{tabular}{l c c c c c}
\hline
\textbf{Circuit} & \!\!\textbf{$\TGATE$-count}\!\! & \!\!\textbf{$\TGATE$-depth}\!\! & \!\!\textbf{\#Ancilla}\!\! & \!\!\makecell{\textbf{\#Garbage}\\\textbf{Ancilla}}\!\! & \!\!\textbf{Depth}\!\!\\
\hline
\!\cite{Q_LZD} circuit 1   & 48+18 & 12+7 & 9 & 3 & 64+23 \\
\!\cite{Q_LZD} circuit 2   & 40+18 & 10+7 & 7 & 3 & 61+23 \\
\!\cite{Q_LZD} circuit 3   & 66+18 & 15+7 & 5 & 3 & 67+23 \\
TA-OP 8-QLZC\!\!          & 28    & 8    & 8 & 0 & 103   \\
TA-OP 8-PQLZC\!\!          & 42    & 11   & 5 & 3 & 59    \\
FO-TA-OP 8-PQLZC\!\!          & 42    & 7   & 7 & 3 & 46    \\
\hline
\end{tabular}
\caption{Comparison of $8$-qubit input QLZC designs. We adopted our merge circuit for both designs and indicated the extra overhead of the merge circuit using a "+".}
\label{tab:8bit_comparison}
\end{table}

\section{Related Work} \label{sec:RelatedWork}
Early quantum leading zero counter or detector were often embedded in floating-point normalization datapaths, such as the single-precision adder~\cite{Q_Fadder_LZC, Q_Fadder_LZD}. Subsequent resource-aware designs for division \cite{Q_Fdivision_LZD} and optimized work (Orts \emph{et al.}, 2023) leverage classical Boolean formulations for fault-tolerant metrics by employing SAT-based synthesis and $\GATE{T\text{-}AND}$ gates~\cite{Q_LZD}.

Compared to prior designs~\cite{Q_LZD,Q_Fdivision_LZD,Q_Fadder_LZD,Q_Fadder_LZC} under the Clifford+T gate set $\{\GATE{H}, \GATE{S}, \GATE{CNOT}, \GATE{T}\}$, our work features:
\begin{enumerate}
  \item \textbf{Scalable Modular Design}: Unlike Boolean-derived circuits that require width-dependent re-optimization of $\GATE{CCX}$ gates and ancilla, our sequential construction employs a systematic structural template that scales linearly at a cost of $1$ $\GATE{T\text{-}AND}$ gate per input qubit. Our architecture eliminates manual re-tuning and writes the leading-zero count directly onto qubits, providing a seamless interface for downstream normalization and shifting.

  \item \textbf{Optimized Parallelization and Fan-Out Control}: Our \emph{parallelized} variant introduces a dedicated merge circuit that exploits a unique property: an all-zero $m$-qubit block yields $\gamma=m$ (binary $10\ldots 0$), where $m=2^p$. This allows us to substitute controlled adders with mere $(\lg(m)\!+\!1)$ $\GATE{CCX}$ $+1$ $\GATE{CX}$ gates. 
  Moreover, with logarithmic-depth fan-out control, these $\GATE{CCX}$ gates can be executed in parallel, reducing the $\TGATE$-depth of merge circuit to a constant. For an \(m\)-qubit PQLZD, this optimization requires only \(m/4\) additional ancilla. Notably, the output MSB of our parallelized variant retains its counting role and serves as the all-zero flag.

  \item \textbf{Polymorphic Architectural Design}: By leveraging the logical symmetry between zero and one detection, our design achieves functional polymorphism. This allows the core architecture to dynamically toggle between QLZC and QLOC configurations via a single mode-selection qubit. 

  \item \textbf{Resource Efficiency}: For $4$-qubit inputs (Table~\ref{tab:4bit_comparison}), our TA-P-OP $4$-QLZC outperforms Circuit 2 of~\cite{Q_LZD} across all evaluated metrics, reducing $\TGATE$-count, $\TGATE$-depth, ancilla count, width, and depth by 40.00\%, 60.00\%, 50.00\%, 20.00\%, and 47.54\%, respectively.  Our architecture remains shallower even without $\GATE{T\text{-}AND}$ integration, validating the algorithm's efficiency—the same trend holds for higher qubit counts (Table~\ref{tab:8bit_comparison}). Furthermore, we evaluate the sequential, parallel, and fan-out-control designs in Table~\ref{tab:tc-td-analysis-nop-op} to analyze the trade-offs of parallelization.
\end{enumerate}

For brevity, full resource derivations are deferred to~\cite{Supplementary}. The detailed cost analyses are presented as follows.

\begin{table*}[t]
\centering
\begin{tabular}{l c c c c c c}
\hline
\textbf{Circuit} & \textbf{T-count} & \textbf{T-depth}\\
\hline
TA-OP $m$-QLZC    & $4m-4 \;=\; \mathcal{O}\big(m)$ 
                    & $m \;=\; \mathcal{O}\big(m) $\\[2pt]
TA-OP $m$-PQLZC    & $\!\!\!\!3m\!+\!\log_2\! m(\tfrac{7}{2}\log_2\! m\!+\!\tfrac{1}{2})\!-\!\tfrac{23}{2} = \mathcal{O}\big(m)$ 
                    & $\log_2\! m(\tfrac{3}{2}\log_2\! m\!-\!\tfrac{1}{2})\!+\!\tfrac{1}{2} = \mathcal{O}\big((\log_2 m)^2\big)$\\[2pt]
FO-TA-OP $m$-PQLZC    & $\!\!\!\!3m\!+\!\log_2\! m(\tfrac{7}{2}\log_2\! m\!+\!\tfrac{1}{2})\!-\!\tfrac{23}{2} = \mathcal{O}\big(m)$ 
                    & $3\log_2 m-2=\mathcal{O}(\log_2 m)$\\[1pt]
\hline
\hline
\textbf{Circuit} & \textbf{\#Ancilla} & \textbf{\#Garbage Ancilla}\\
\hline
TA-OP $m$-QLZC    & $ m\;=\; \mathcal{O}\big(m) $ 
                    & $ 0\;=\; \mathcal{O}\big(1) $\\[2pt]
TA-OP $m$-PQLZC    & $ \tfrac{5}{4}m - \log_2 m - 2\;=\; \mathcal{O}\big(m) $ 
                    & $ m - \log_2 m - 2\;=\; \mathcal{O}\big(m) $\\[2pt]
FO-TA-OP $m$-PQLZC    & $ \tfrac{3}{2}m - \log_2 m - 2\;=\; \mathcal{O}\big(m) $ 
                    & $ m - \log_2 m - 2\;=\; \mathcal{O}\big(m) $\\[1pt]
\hline
\hline
\textbf{Circuit} & \textbf{Width} & \textbf{Depth}\\
\hline
TA-OP $m$-QLZC    & $2m + \lfloor \log_2 m \rfloor + 1 \;=\; \mathcal{O}\big(m) $ 
                    & $14(m - 1)\;+2m \;=\; \mathcal{O}\big(m) $ \\[2pt]
TA-OP $m$-PQLZC    & $\tfrac{9}{4}m - 1 \;=\; \mathcal{O}\big(m) $ 
                    & $\log_2\! m(\tfrac{9}{2}\log_2\! m\!+\!\tfrac{15}{2})\!-\!1  = \mathcal{O}\big((\log_2 m)^2\big)$\\[2pt]
FO-TA-OP $m$-PQLZC    & $\tfrac{5}{2}m - 1 \;=\; \mathcal{O}\big(m) $ 
                    & $12\log_2 m+8+2\lg((\log_2 m-1)!)=\mathcal{O}\!(\log_2 m\,\log_2\log_2 m)$\\[1pt]
\hline
\end{tabular}
\caption{Cost analysis of TA-OP $m$-QLZC, TA-OP $m$-PQLZC and FO-TA-OP $m$-PQLZC.}
\label{tab:tc-td-analysis-nop-op}
\end{table*}

\begin{table*}[tb]
\centering
\begin{tabular}{|c|c|r|c|c|c|c|c|}
\hline
\textbf{$n$} &
\textbf{$X$ (Dec)} &
\textbf{$X$ (Bin)$\hspace{1.75cm}$} &
\textbf{\makecell{$\MATHFN{LZC}(X)$ \\ (True)}} &
\textbf{\makecell{TA-OP \\ QLZC}} &
\textbf{\makecell{TA-OP \\ PQLZC}}&
\textbf{\makecell{FO-TA-OP \\ PQLZC}}&
\textbf{\makecell{reconfigurable QLZOC \\ ($c=1$)}} \\
\hline
11 & 0        & 00000000000                      & 11 & 11 & 11 & 11& 11 \\ \hline
13 & 1        & 0000000000001                    & 12 & 12 & 12 & 12& 12 \\ \hline
16 & 291      & 0000000100100011                 & 7  & 7  & 7  & 7 & 7  \\ \hline
20 & 241      & 00000000000011110001             & 12 & 12 & 12 & 12& 12 \\ \hline
24 & 42480    & 000000001010010111110000         & 8  & 8  & 8  & 8 & 8  \\ \hline
28 & 8388608  & 0000100000000000000000000000     & 4  & 4  & 4  & 4 & 4  \\ \hline
32 & 15790320 & 00000000111100001111000011110000 & 8  & 8  & 8  & 8 & 8  \\ \hline
\end{tabular}
\caption{Correctness check of TA-OP $m$-QLZC, TA-OP $m$-PQLZC, FO-TA-OP $m$-PQLZC and reconfigurable QLZOC under control $c=1$ (QLZC mode) against the ground-truth.}
\label{tab:lzd_correctness}
\end{table*}

\begin{table*}[tb]
\centering
\begin{tabular}{|c|c|r|c|c|c|c|c|}
\hline
\textbf{$n$} &
\textbf{$X$ (Dec)} &
\textbf{$X$ (Bin)$\hspace{1.75cm}$} &
\textbf{\makecell{$\MATHFN{LOC}(X)$ \\ (True)}} &
\textbf{\makecell{TA-OP \\ QLOC}} &
\textbf{\makecell{TA-OP \\ PQLOC}}&
\textbf{\makecell{FO-TA-OP \\ PQLOC}}&
\textbf{\makecell{reconfigurable QLZOC \\ ($c=0$)}} \\
\hline
11 & 2047       & 11111111111                      & 11 & 11 & 11 & 11 & 11 \\ \hline
13 & 8190       & 1111111111110                    & 12 & 12 & 12 & 12 & 12 \\ \hline
16 & 65475      & 1111111111000011                 & 10 & 10 & 10 & 10 & 10 \\ \hline
20 & 1044497    & 11111110111100010001             & 7  & 7  & 7  & 7  & 7  \\ \hline
24 & 16711680   & 111111110000000000000000         & 8  & 8  & 8  & 8  & 8  \\ \hline
28 & 260046848  & 1111011110000000000000000000     & 4  & 4  & 4  & 4  & 4  \\ \hline
32 & 4043309040 & 11110000111100001111000011110000 & 4  & 4  & 4  & 4  & 4  \\ \hline
\end{tabular}
\caption{Correctness check of TA-OP $m$-QLOC, TA-OP $m$-PQLOC, FO-TA-OP $m$-PQLOC and reconfigurable QLZOC under control $c=0$ (QLOC mode) against the ground-truth.}
\label{tab:loc_correctness}
\end{table*}


Since~\cite{Q_LZD} outperforms previous designs~\cite{Q_Fdivision_LZD, Q_Fadder_LZD,Q_Fadder_LZC}, we adopt it as our primary benchmark. 
Following Sec.~\ref{subsec:Temporary-logical-AND-gate}, we assign 4 $\TGATE$-count / 2 $\TGATE$-depth to each $\GATE{T\text{-}AND}$ and 7 $\TGATE$-count / 3 $\TGATE$-depth to each $\GATE{CCX}$ gate for all evaluations. 

First, Table~\ref{tab:4bit_comparison} compares our $4$-qubit input QLZC against two representative circuits from~\cite{Q_LZD}: ``Circuit~3'' (minimum-qubit) and ``Circuit~2'' (best-depth). 
Compared with Circuit~3, all three variants (P-OP $4$-QLZC, TA-P-OP $4$-QLZC, and TA-OP $4$-QLZC) improve $\TGATE$-count/$\TGATE$-depth and overall depth, achieving substantial depth reductions---up to \DTfoVtr\ in overall circuit depth and \TDTfoVtr\ in $\TGATE$-depth in the best cases. 
Against Circuit~2, TA-P-OP $4$-QLZC demonstrates a clear advantage across all evaluated metrics, reducing $\TGATE$-count, $\TGATE$-depth, ancilla count, width, and depth by 40.00\%, 60.00\%, 50.00\%, 20.00\%, and 47.54\%, respectively. 
Overall, TA-P-OP $4$-QLZC outperforms all evaluated metrics. Our architecture remains shallower regardless of $\GATE{T\text{-}AND}$ integration, validating the algorithm's efficiency—a trend mirrored in the $8$-qubit results (Table~\ref{tab:8bit_comparison}). This demonstrates that quantum-native constructions significantly surpass classical Boolean mappings, highlighting the necessity of specialized quantum intuition for efficient fault-tolerant circuits.

Next, we analyze the TA-OP $m$-QLZC, TA-P-OP $m$-PQLZC, and FO-TA-P-OP $m$-PQLZC designs in Table~\ref{tab:tc-td-analysis-nop-op} to evaluate the trade-offs of parallelization for a given resource budget. It can be observed that each circuit exhibits distinct advantages. 
The TA-OP $m$-QLZC requires fewer ancilla and produces no garbage, making it preferable when qubit resources are strictly limited. 
In contrast, the TA-P-OP $m$-PQLZC reduces the $\TGATE$-count and achieves lower $\TGATE$-depth and overall depth, providing a more favorable depth--non-Clifford cost trade-off. 
Furthermore, the FO-TA-P-OP $m$-PQLZC achieves a superior $\TGATE$-depth reduction to $\mathcal{O}(\log_2 m)$ and an overall depth of $\mathcal{O}(\log_2 m \log_2 \log_2 m)$, at the expense of only $m/4$ additional reusable ancilla. 
Designers may thus choose between the three implementations depending on the specific hardware or application requirements.




\section{Experimental Results} \label{sec:ExperimentalResults}
This subsection presents the experimental results for the various quantum circuits proposed in this work. This includes TA-OP $m$-QLZC, TA-OP $m$-PQLZC, and FO-TA-OP $m$-PQLZC. Furthermore, we evaluate their corresponding QLOC versions and the reconfigurable QLZOC architecture, which supports both leading-zero and leading-one counting modes. 

We employ \texttt{Qiskit}~\cite{qiskit} for small-to-medium instances and switch to \texttt{SliQSim}~\cite{sliqsim} for larger qubit counts, where direct simulation becomes computationally expensive.

To verify the functional integrity of these designs, each circuit is tested against ground-truth values across various register widths $n$, with results summarized in the following tables. 
While the tables list results up to 32 bits for brevity, higher bit-width tests were also performed and verified against ground-truth values. The implementation can be found in~\cite{Supplementary}.

As shown in Table~\ref{tab:lzd_correctness}, for representative test vectors $X$, all circuits output match the ground-truth reference $\MATHFN{LZC}(X)$, confirming that the proposed QLZC behave as intended. 
We note that $m$-PQLZC natively supports widths $m = 8\cdot 2^p$
. For arbitrary widths, we pad the input by appending zeros to the least-significant end to obtain a supported size without altering $\MATHFN{LZC}(X)$ or correctness.

As shown in Table~\ref{tab:lzd_correctness}, for various representative test vectors $X$, all proposed circuits of QLZC yield outputs that match the ground-truth $\MATHFN{LZC}(X)$, confirming their functional correctness. 
Since $m$-PQLZC natively supports register widths $m = 8 \cdot 2^p$
, inputs of arbitrary width are padded with zeros at the least-significant end; this ensures compatibility without altering the $\MATHFN{LZC}(X)$ value or the circuit's integrity. 
Similar verification results for the QLOC variants and the reconfigurable QLZOC architecture are summarized in Table~\ref{tab:loc_correctness} and Table~\ref{tab:lzd_correctness}, respectively, all demonstrating full alignment with their corresponding ground-truth references.

These results serve to substantiate our theoretical analyses and to empirically validate the correctness and expected performance of circuit.
\section{Conclusions} \label{sec:Conclusions}
This paper introduces a resource-efficient, scalable QLZOC architecture, serving as a foundational building block for quantum normalization and numerical scaling. By reformulating the counting logic into conditional gate operations, we provide modular designs that prioritize $\TGATE$-efficiency. Our FO-PQLZOC achieves a substantial $\TGATE$-depth reduction to $\mathcal{O}(\log_2 m)$ with minimal ancilla overhead, while the polymorphism of our design enables dynamic reconfiguration between LZC and LOC modes with negligible Clifford overhead. Experimental simulations across various register widths $n$ empirically validate the functional integrity and correctness of our designs. These improvements—highlighted by 60.00\% $\TGATE$-depth and 40.00\% $\TGATE$-count reductions in the 4-qubit case—establish our QLZOC as a robust and flexible component for next-generation quantum arithmetic and large-scale processors.


Future research will focus on integrating this QLZOC architecture into full-scale quantum floating-point units and logarithmic arithmetic modules, while investigating further $\TGATE$-depth optimizations using advanced gate-cancellation techniques. Additionally, we aim to explore the automated synthesis and integration of these modular blocks into comprehensive quantum arithmetic logic units (ALU) for large-scale applications.

\clearpage

\nocite{*}
\bibliographystyle{IEEEtran}
\bibliography{QLZC_reference}


\end{document}